\newcommand{\beq}{\begin{equation}}  \newcommand{\eeq}{\end{equation}}
\newcommand{\bal}{\begin{aligned}}   \newcommand{\eal}{\end{aligned}}
\newtheorem*{theorem}{Theorem}
\newtheorem*{lemma}{Lemma}
\def\Cl{\text{Cl}} 
\def\IC{\mathbb{C}}
\def\IZ{\mathbb{Z}}
\def\IN{\mathbb{N}}
\def\cO{\mathcal{O}}
\def\cS{\mathcal{S}}
\def\cP{\mathcal{P}}
\def\cF{\mathcal{F}}
\def\cT{\mathcal{T}}
\def\fm{\mathfrak{m}} 
\def\bx{\mathbf{x}}
\def\be{\mathbf{e}}
\def\bu{\mathbf{u}}
\def\ba{\mathbf{a}}
\def\bb{\mathbf{b}}
\def\wH{\widetilde{H}}
\def\wC{\widetilde{\mathcal{C}}}
\def\wsig{\sigma}
\def\larr{\longleftarrow}
\def\rarr{\longrightarrow}
\def\ds{\partial}
\DeclareMathOperator{\Class}{\mathrm{Cl}}
\DeclareMathOperator{\negind}{\mathrm{neg}}
\DeclareMathOperator{\Tor}{\mathrm{Tor}}
\DeclareMathOperator{\link}{\mathrm{link}}
\DeclareMathOperator{\kernel}{\mathrm{ker}}
\DeclareMathOperator{\im}{\mathrm{im}}
\DeclareMathOperator{\sign}{\mathrm{sign}}
\DeclareMathOperator{\lcm}{\mathrm{lcm}}
\DeclareMathOperator{\degree}{\mathrm{deg}}
\def\clap#1{\hbox to 0pt{\hss#1\hss}}
\def\ce{\mathrel{\mathop:}=}  
\begin{document}

\baselineskip=14pt

\vspace*{-1.5cm}
\begin{flushright}    
  {\small
  MPP-2010-64  }
\end{flushright}

\vspace{2cm}
\begin{center}        
  {\LARGE
  Cohomology of Line Bundles:\\[0.2cm]
  Proof of the Algorithm
  }
\end{center}

\vspace{0.75cm}
\begin{center}        
  Thorsten Rahn$^1$, 
  Helmut Roschy$^1$
\end{center}

\vspace{0.15cm}
\begin{center}        
  \emph{$^{1}$ Max-Planck-Institut f\"ur Physik, F\"ohringer Ring 6, \\ 
               80805 M\"unchen, Germany}
               \\[0.15cm] 
  E-mail: \tt rahn@mppmu.mpg.de, roschy@mppmu.mpg.de
\end{center} 

\vspace{1.5cm}


\begin{abstract}  We present a proof of the algorithm for computing line bundle valued
  cohomology classes over toric varieties conjectured by R.~Blumenhagen, B.~Jurke and the
  authors (\href{http://arxiv.org/abs/1003.5217}{\tt arXiv:1003.5217}) and suggest a kind
  of Serre duality for combinatorial Betti numbers that we observed when computing
  examples.
\end{abstract}


\tableofcontents

 \section{Introduction}

 In a recent publication~\cite{Cohompaper} we conjectured an algorithm\footnote{The
   speed-optimized implementation in C++ can be downloaded from
   \href{http://wwwth.mppmu.mpg.de/members/blumenha/cohomcalg}{http://wwwth.mppmu.mpg.de/members/blumenha/cohomcalg}
   and is regularly updated. To get a first experience of the calculations possible, one
   can also have a quick start with a short Mathematica script that is also available
   there. } facilitating the computation of sheaf cohomology of line bundles over toric
 ambient spaces, quantities that are often required by string theorists since these data
 encode much of the phenomenological content of a chosen string compactification on an
 embedded Calabi-Yau manifold. The arising new computational possibilities in heterotic
 model building or F-theory will be investigated in a separate paper on
 applications~\cite{PaperSecondPart}, but before this we would like to give the algorithm
 a firm mathematical basis in order to dispel any lurking doubts about therewith achieved
 results in advance.

 Before we blithely plunge into the abstract world of combinatorial algebra, we would
 first like to emphasize the advantages that showed up when we first made use of our
 conjectured algorithm. First of all, using Serre duality as a consistency check, we could
 often get rid of explicit maps of complexes -- usually, one has to determine the ranks of
 sparse matrices whose dimensions may be huge for high-dimensional varieties. This sped up
 computations by an estimated (and palpable) factor of $10^{6}$ in comparison with
 existing routines that are used e.g. in Macaulay2~\cite{Macaulay2}.\footnote{In order to
   do sheaf cohomology computations on general toric varieties, the additional package
   ``NormalToricVarieties.m2'' written by Gregory Smith is needed. Since this is still work
   in progress, it is not yet included in the official distribution, but the package
   content can be copied from his
   \href{http://mast.queensu.ca/\~ggsmith/NormalToricVarieties.m2}{homepage}, and then
   seperately loaded into Macaulay2.
 } For most of the examples of toric ambient spaces common to F-theory or other parts of
 theoretical physics computations are instantaneous (at least if one restricts oneselve to
 divisor charges of a sensible size), whereas on the same platform Macaulay2 needs some
 seconds already in the case of 2-dimensional varieties like the del-Pezzo surfaces.

 Another important advantage is the restriction to contributions coming from the power set
 of the Stanley-Reisner ideal of the toric variety. Hence, our algorithm also outstrips
 the ``chamber algorithm'' that is presented in~\cite{CoxLittleSchenk:ToricVarieties} and implemented
 in~\cite{Cvetic:2010rq}. As it turns out, in many examples only $\sim 10 \% $ of all
 possible denominators of representatives for \v{C}ech cohomology appear in the power set,
 so about $90 \%$ of the complexes (with explicit matrices) that are evaluated in the
 chamber algorithm will not contribute from the very beginning, which also accounts for
 the lower performance of this method. Nevertheless, the chamber algorithm served as a
 motivation for our conjecture, since it conveys some intuitive feeling of how local
 sections fit together and how \v{C}ech cohomology arises.\footnote{Note that it can be
   shown that \v{C}ech cohomology on an open cover of a toric variety can be shown to be
   isomorphic to sheaf cohomology, see Theorem 9.0.4 in~\cite{CoxLittleSchenk:ToricVarieties}.}

 The aim of this paper is to substantiate the mathematical background of the conjectured
 algorithm and then give a rigorous proof in the language of combinatorial commutative
 algebra. In fact, when we set out for a mathematical understanding, we found that many of
 the appearing structures could be given a quick formulation in algebraic terms, and some
 features of the algorithm can be derived quite easily in this manner. Therefore, Section
 2 will be devoted to the introduction of some concepts from commutative algebra and the
 beauty and concreteness of these will then enable us to reformulate the original
 conjecture and prove it in Section 3.

 In writing the last sentences of this work, another proof of our conjecture was published
 by S.Y.~Jow in~\cite{Jow}. It puts more emphasis on elements of simplicial topology, so
 we hope that the combinatorial counterpart that we present is of interest on its own. For
 completeness, we will make the equivalence of both approaches explicit right after our
 main theorem. In fact, the key to the meaning of our ``remnant cohomology'' was also
 anticipated by Jow.

 We finish this work by making some comments on further possible simplifications that may
 speed up computations by another of magnitude and in particular point out the possibility
 of ``Serre duality for Betti numbers'' emerging from the established relationships
 between toric and combinatorial algebra.

 \section[Mathematical Preliminaries]{Mathematical Preliminaries}

\subsection{Normal Toric Varieties and Alexander duality}
\label{sec:norm-toric-vari}

Since an introduction to toric varieties was already given in~\cite{Cohompaper}, we
confine ourselves to a short review of the setup. Let $X$ be a complete simplicial smooth
normal toric variety\footnote{in the sense of Chapter 3 of~\cite{CoxLittleSchenk:ToricVarieties}} that
is given in terms of $n$ \textit{vertices} $\nu_{i}$ in $\IZ^{d}$, a
\textit{triangulation} of this vertex scheme yielding the fan $\Sigma$ of the variety $X$
and \textit{weights} (often called charges by the physicists) coming from relations among
the vertices. The homogeneous coordinate ring or \textit{Cox ring} belonging to $X$ is
$S=\IC[\bx]\ce\IC[x_{1},x_{2},\dots ,x_{n}]$ and the \textit{irrelevant ideal}
$B_{\Sigma}$ in $S$ is generated by the monomials
 \begin{equation}
   \label{eq:1}
   \lbrace x_{j_{1}}\cdots x_{j_{s}}\, |\,\lbrace \nu_{1},\dots ,\nu_{n}\rbrace \setminus
   \lbrace \nu_{j_{1}},\dots ,\nu_{j_{s}}\rbrace \,\, \text{spans a cone in } \Sigma
   \rbrace \, ,
 \end{equation}
 so a minimal generating set of $B_{\Sigma}$ is given by the monomials corresponding to
 complements of the maximal cones of $\Sigma$. 
 The \textit{Stanley-Reisner ideal} $I_{\Sigma}$ in $S$ is generated by the monomials
 \begin{equation}
   \label{eq:2}
   \lbrace x_{i_{1}}\cdots x_{i_{s}}\, |\,
   \nu_{i_{1}},\dots ,\nu_{i_{s}}\,\, \text{do not lie in a common cone of } \Sigma
   \rbrace \, ,
 \end{equation}
 so a minimal set of generators would be given by monomials corresponding to minimal
 sets of vertices that do \textit{not} span a cone in $\Sigma$. 

 We want to point out that there is a nice duality operation for monomial ideals that
 connects $B_{\Sigma}$ and $I_{\Sigma}$. Since it will be a key ingredient to account for
 the appearance of $I_{\Sigma}$ in the computation of sheaf cohomology, we introduce it
 more formally. First note that the fans we discern are all simplicial, i.e.~for all cones
 in $\Sigma$ the generating vertices are linearly independent vectors in $\IZ^{d}$. This
 means that instead of the fan $\Sigma$, we may also work with the corresponding
 simplicial complex $\Delta$ of the variety defined on the set $[n]=\lbrace 1,\dots ,
 n\rbrace$, where the $i$-dimensional faces of $\Delta$ are in one-to-one correspondence
 with the $(i+1)$-dimensional cones of $\Sigma$. In the remainder of the paper, we will
 sometimes assume this perspective of things and restrict ourselves to the language of
 simplicial complexes.\footnote{A condensed introduction to simplicial complexes meeting
   our requirements is given e.g.~by the first chapter of~\cite{Sturmfels}.} That means
 that we also write $I_{\Delta}$, $B_{\Delta}$, $\dots$, but one should keep in mind that
 any statements in this language can in principle be given a topological meaning when
 talking about the geometry of polyhedral cones rather than the algebra of simplicial
 complexes.

 Write $\bx^{\sigma}=\prod_{i\in \sigma}x_{i}$ for some subset $\sigma\subseteq [n]$ and
 \begin{equation}
   \label{eq:3}
   \fm^{\sigma} =\langle x_{i}\, | \, i\in\sigma \rangle 
 \end{equation}
 for the monomial prime ideal corresponding to $\sigma$.
 Then the (squarefree) \textit{Alexander dual} of some monomial ideal $J = \langle
 \bx^{\sigma_{1}},\dots , \bx^{\sigma_{r}} \rangle$
 is 
 \begin{equation}
   \label{eq:4}
   J^{*}=\fm^{\sigma_{1}}\cap \dots \cap \fm^{\sigma_{r}}\, .
 \end{equation}
 It is easy to show that for any simplicial complex we have $B_{\Delta}=I_{\Delta}^*$.
 Since there is a bijection between simplicial complexes and squarefree monomial ideals
 this means that we can also have a look at the (Alexander) dual complex $\Delta^*$, which
 is determined by $I_{\Delta^*}=I_{\Delta}^*=B_{\Delta}$, i.e.~the original irrelevant
 ideal becomes the Stanley-Reisner ideal of the dual complex. It turns out that these
 considerations are indeed relevant when it comes to the computation of (reduced)
 simplicial (co)homology by Hochster's formula.

\subsection{Sheaf-Module-Correspondence and Local Coho\-mo\-logy}
\label{sec:multigraded-rings}

In order to make use of algebraic concepts, we have to reformulate the computation of
sheaf cohomology\footnote{For a short
  review of sheaf theory and sheaf cohomology have a look at the appendix of
  \cite{Cohompaper}} on the variety $X$ in terms of module theory of the Cox coordinate ring
$S$. A first step towards this is the sheaf-module correspondence, which enables us to construct quasicoherent sheaves on $X$ from any
module $M$ over $S$ that is \textit{graded} by the class group\footnote{Note that we always
  identify Picard group and class group of $X$, since in the smooth case all Weil divisors
  are already Cartier.} $\Class (X) \cong \IZ^{n-d}$. For the details of the
construction, see e.g.~\text{\S}5.3 of \cite{CoxLittleSchenk:ToricVarieties}. Since we deal with line
bundles, the only important observation is that the coordinate ring $S$ itself is $\Class
(X)$-graded, i.e.~it has a decomposition
\begin{equation}
  \label{eq:6}
  S=\bigoplus_{\alpha\in\Class (X)} S_{\alpha}\, ,\quad S_{\alpha}\cdot S_{\beta}\subset
  S_{\alpha + \beta} \,  
\end{equation}
and that the graded pieces $S_{\alpha}$ are naturally isomorphic to the sections of twisted line
bundles, namely
\begin{equation}
  \label{eq:7}
  S_{\alpha} \cong  \Gamma (X,\cO_{X}(\alpha))\, .
\end{equation}
This means that the \textit{shift} $S(\alpha)$ of $S$ defined by the grading
$S(\alpha)_{\beta}=S_{\alpha + \beta}$ gives rise to the line bundle $\cO_{X}(\alpha)$ and
therefore sheaf cohomology should also be computable from $S(\alpha)$. As it turns out,
the algebraic equivalent of sheaf cohomology is local cohomology with support on the
irrelevant ideal. The reason for this lies in the fact that the map from modules to
sheaves is not injective, since starting with $S$-modules $M$ that fulfil $\left(
  B_{\Sigma}\right)^{l}M=0$ for $l\gg 0$ leads to trivial sheaves. Taking global sections
on the sheaf side therefore in a certain way corresponds to looking at elements with
support on the irrelevant ideal on the module side.
Local cohomology is then defined completely analogous to sheaf cohomology as the
right-derived functor of the operation of taking supports.

We now introduce the necessary notions and then state the
relevant special case of Theorem 9.5.7 in \cite{CoxLittleSchenk:ToricVarieties}.
For an $S$-module $M$ and an ideal $J\subset S$ one defines the $J$-\textit{torsion
  submodule} or submodule \textit{supported on} $J$ by
\begin{equation}
  \label{eq:8}
  \Gamma_{J}(M)=\lbrace y\in M \, | \, J^{l}y =0 \, \, \text{for some } l\in \IN\,
  \rbrace\, .
\end{equation}
The $i$-th \textit{local cohomology} module of $M$ with support on $J$ is then defined to
be the module $H^{i}_{J}(M)$ obtained from any injective resolution $0 \rightarrow
I^{0}\rightarrow I^{1}\rightarrow \cdots $ of $M$ by taking the $i$-th cohomology of the
subcomplex $0\rightarrow \Gamma_{J}(I^{0})\rightarrow \Gamma_{J}(I^{1})\rightarrow \cdots
$. In particular, if $M$ is graded by $\Class (X)$, then also $\Gamma_{J} (M)$ and all
$H^{i}_{J}(M)$ will inherit this grading. The precise connection between line bundle
cohomology and local cohomology is then given by\footnote{the shift in the rank comes from
  a shift between the ordinary and the local \v{C}ech complex, see also Theorem 9.5.7 in~\cite{CoxLittleSchenk:ToricVarieties}. }
\begin{equation}
  \label{eq:9}
  H^{i}(X,\cO_{X}(\alpha)) \cong  H^{i+1}_{B_{\Sigma}}(S)_{\alpha} \quad
  \text{for } i\geq 1\, ,\alpha\in\Class (X)\, .
\end{equation}
Furthermore, there is an exact sequence
\begin{equation}
  \label{eq:10}
  0\rightarrow H^{0}_{B_{\Sigma}}(S)_{\alpha} \rightarrow S_{\alpha}
  \rightarrow H^{0}(X,\cO_{X}(\alpha)) \rightarrow
    H^{1}_{B_{\Sigma}}(S)_{\alpha}\rightarrow 0\, ,
\end{equation}
which is necessary to determine the $0$-th rank of sheaf cohomology. Because of
eq.~\eqref{eq:7} and $H^{0} (X,\cO_{X}(\alpha))=\Gamma (X,\cO_{X}(\alpha))$, the middle
map is an isomorphism. Furthermore $H^{0}_{B_{\Sigma}}(S)=0$, since $S$ has no zero
divisors and so in the special case of line bundles we get 
\begin{equation}
  \label{eq:11}
  H^{i}(X,\cO_{X}(\alpha)) \cong H^{i+1}_{B_{\Sigma}}(S)_{\alpha} \quad
  \text{for } i\geq 0\, ,\alpha\in \Class (X)\, .
\end{equation}
Before we finish this section, we want to introduce a finer grading of the Cox coordinate
ring $S$, namely the $\IZ^{n}$-grading that is given by $\text{deg}\, x_{i}=\be_{i}\in
\IZ^{n}$. The connection to the class group grading is given by the map
\begin{equation}
  \label{eq:12}
  f: \IZ^{n}\longrightarrow \Class (X) \cong \IZ^{n-d}\, ,\quad \be_{i}
  \mapsto [D_{i}] = (Q_{i}^{(1)},\dots , Q_{i}^{(n-d)})\, ,
\end{equation}
where the $Q_{i}^{(r)}$ denote the charges (or weights) of the coordinate divisor $D_{i}$
belonging to the hypersurface $\lbrace x_{i}=0 \rbrace$.
Since this finer grading is also inherited by the local cohomology modules, we may write eq.~\eqref{eq:11} as
\begin{equation}
  \label{eq:13}
  H^{i}(X,\cO_{X}(\alpha))=
  \bigoplus_{\substack{\bu\in\IZ^{n}\\f(\bu)=\alpha}}
  H^{i+1}_{B_{\Sigma}}(S)_{\bu} \,  
\end{equation}
for any $\alpha\in\Class (X)$. So the procedure would be to try and compute all $\IZ^{n}$-graded pieces of local cohomology
and at the end determine sheaf cohomology of $\cO_{X}(\alpha)$ by
summing up the contributions fulfilling $f(\bu)=\alpha$, which is a matrix equation over the
integers solvable by standard techniques using Ehrhart polynomials. In fact, this is
nothing but the rationom counting procedure of our algorithm, but we will state this later in a
more precise form.



\subsection{Simplicial Complexes and Free Resolutions}
\label{sec:simpl-compl}



The crucial link between sheaf cohomology over $X$ and the combinatorics of the associated
Stanley-Reisner ideal $I_{\Delta}$ will be a theorem by Musta\c{t}\u{a} that expresses the
dimensions of fine-graded local cohomology modules by Betti numbers of $I_{\Delta}$. But
before we state this result, we want to use this section to introduce some additional
notions from homological algebra.

Let $\Delta$ be a simplicial complex on $[n]$. For each $i\geq -1$ denote by
$F_{i}$ the set of $i$-dimensional \textit{faces} (subsets $\sigma\subseteq [n]$ of cardinality $i+1$) and let $\IC^{F_{i}}$ be the complex
vector space whose basis elements $e_{\sigma}$ correspond to all $\sigma \in F_{i}$
The \textit{reduced chain complex} of $\Delta$ is the complex
\begin{equation}
  \label{eq:5}
\wC_{\bullet}(\Delta):\quad 
  0\longleftarrow \IC^{F_{-1}}\stackrel{\ds_{0}}{\longleftarrow} \IC^{F_{0}}
  \stackrel{\ds_{1}}{\longleftarrow} \cdots \stackrel{\ds_{n-1}}{\longleftarrow} \IC^{F_{n-1}} \longleftarrow 0\, .
\end{equation}
The \textit{boundary maps} $\ds_{i}$ are defined by setting
$\sign(j,\sigma)=(-1)^{r-1}$ when $j$ is the $r$-th element of $\sigma \subseteq [n]$
written in increasing order, and
\begin{equation}
  \label{eq:14}
  \ds_{i}(e_{\sigma})=\sum_{j\in\sigma}\sign(j,\sigma)e_{\sigma\setminus j}\, .
\end{equation}
One has $\ds_{i}\circ\ds_{i+1}=0$ and therefore defines the $i$-th \textit{reduced
homology} of $\Delta$ (with coefficients in $\IC$) as 
\begin{equation}
  \label{eq:15}
  \wH_{i}(\Delta)=\kernel(\ds_{i})/\im(\ds_{i+1})\, .
\end{equation}
Note that there is a distinction between the \textit{void complex} $\lbrace \rbrace$ and
the \textit{irrelevant complex} $\lbrace \emptyset \rbrace$. One has
$\wH_{-1}(\lbrace \emptyset \rbrace )\cong \IC$, since the empty set is a face
of dimension $-1$.

Taking the vector space dual of the chain complex (and the transpose of all maps)
one gets the \textit{cochain complex} of $\Delta$ as
$\wC^{\bullet}(\Delta)=(\wC_{\bullet}(\Delta))^{*}$ with
\textit{coboundary maps} $\ds^{i}=\ds_{i}^{*}$. One similarly defines the $i$-th \textit{reduced
cohomology} of $\Delta$ as
\begin{equation}
  \label{eq:16}
  \wH^{i}(\Delta)=\kernel(\ds^{i+1})/\im(\ds^{i})\, .
\end{equation}
Since we have coefficients in $\IC$, there is a canonical isomorphism $\wH^{i}(\Delta)\cong
(\wH_{i}(\Delta))^{*}$ and thus
\begin{equation}
  \label{eq:17}
  \dim_{\IC} \wH^{i}(\Delta) = \dim_{\IC}
 \wH_{i}(\Delta)\, .
\end{equation}

We now shortly review the definitions of (minimal) free resolutions, Taylor resolution,
Betti numbers of an $S$-module $M$ and for completeness also write down (both versions of)
the Hochster formula. Details and proofs can be found in~\cite{Sturmfels}.

We begin with the definition of a free resolution. Let $V_{i}\, ,0\leq i\leq \ell$ be a
collection of free $S$-modules, i.e.~all the $V_{i}$ have the form
$\oplus_{q_{i}}S(-\ba_{q_{i}})$ with all $\ba_{q_{i}}\in \IZ^{n}$ and $S(\ba)$ denoting
the degree shift of $S$ by $\ba$. A sequence
\begin{equation}
  \label{eq:18}
  \cF_{\bullet}: \quad 0\larr V_{0}\stackrel{\phi_{1}}{\larr} V_{1}\larr \cdots
  \larr V_{\ell -1}\stackrel{\phi_{\ell}}{\larr} V_{\ell}\larr 0
\end{equation}
of free modules with maps fulfilling $\phi_{i}\circ \phi_{i+1}=0$ is called a \textit{complex}. This complex is
$\IZ^{n}$\textit{-graded} if each homomorphism is of degree $\mathbf{0}$, i.e.~for all
elements $r_{i}\in V_{i}$ one has $\text{deg}\, r_{i}=\text{deg}\, \phi_{i}(r_{i})$ in
$\IZ^{n}$.
Assuming $V_{\ell}\neq 0$, the \textit{length} of the complex $\cF_{\bullet}$ equals $\ell$.
A complex $\cF_{\bullet}$ is called a \textit{free resolution} of an $S$-module $M$ if
$\cF_{\bullet}$ is \textit{acyclic} (i.e.~exact everywhere except in homological degree $0$), where
$M=V_{0}/\im(\phi_{1})$. The Hilbert Syzygy Theorem tells us that every $S$-module has a
free resolution with length at most $n$. Since we will only look at modules $M$ of the form
$I$ or $S/I$, where $I$ is a monomial ideal of $S$, we always get free
resolutions that are naturally $\IZ^{n}$-graded. 

Define a partial order on $\IZ^{n}$ by letting $\ba \leq \bb$ if the components fulfil
$a_{i}\leq b_{i}$ for all $i\in [n]$. To state when a free resolution is minimal, we
introduce the concept of monomial matrices. These represent maps
\begin{equation}
  \label{eq:19}
  \bigoplus_{q}S(-\ba_{q})\stackrel{\phi}{\larr} \bigoplus_{p}S(-\ba_{p})
\end{equation}
between two free $S$-modules. A \textit{monomial matrix} consists of entries
$\lambda_{qp}\in \IC$ arranged in columns labeled by the \textit{source degrees} $\ba_{p}$
and rows labeled by the \textit{target degrees} $\ba_{q}$ and whose entry $\lambda_{qp}$
is zero unless $\ba_{p} \geq \ba_{q}$ in the partial order of $\IZ^{n}$. The map $\phi$
will then send the basis vector $\mathbf{1}_{\ba_{p}}$ of $S(-\ba_{p})$ to the element
$\lambda_{qp}\bx^{\ba_{p}-\ba_{q}}\cdot \mathbf{1}_{\ba_{q}}$ in $S(-\ba_{q})$. The
condition $\ba_{p}\geq \ba_{q}$ then just guarantees that $\bx^{\ba_{p}-\ba_{q}}\in S$ and
the image of $\phi$ makes sense.

Such a monomial matrix is called \textit{minimal} if $\lambda_{qp}=0$ whenever
$\ba_{p}=\ba_{q}$. Similarly, a free resolution of some module $M$ is called a
\textit{minimal free resolution} if all the maps in the resolution can be represented by
minimal monomial matrices. This means that the ranks of the free modules $V_{i}$ in a
resolution are all simultaneously minimized. In particular, any free resolution of $M$
contains the (unique up to isomorphism) minimal resolution as a subcomplex.

As an example of a free resolution, take a monomial ideal $I=\langle m_{1},\dots ,m_{t}
\rangle$ in $S$ and write $m_{\tau}=\lcm\lbrace m_{j}\, |\, j\in\tau\rbrace$ for
any $\tau\subseteq [t]$. Furthermore, set
$\ba_{\tau}=\text{deg}(m_{\tau})\in\IZ^{n}$. The \textit{full Taylor resolution} of
$I$ is based on the reduced chain complex
\begin{equation}
  \label{eq:43}
  \cT_{\bullet}(t)\ce \wC_{\bullet}(\Delta_{[t]})
\end{equation}
of the full simplex $\Delta_{[t]}$ consisting of \textit{all} subsets of $[t]$. To arrive
at the Taylor resolution, one substitutes all $\IC^{F_{j}}$ in $\cT_{\bullet}(t)$ by
$\oplus_{\tau\in F_{j}}S(-\ba_{\tau})$ and puts the boundary maps $\partial_{j}$ into a
sequence of monomial matrices $M(\partial_{j})$ with source and target labels $\ba_{\tau}$
corresponding to faces $\tau\in\Delta_{[t]}$ and entries $\lambda_{\tau,\tau\setminus
  k}=\sign (k,\tau)$ equal to the sign factors from eq.~\eqref{eq:14}. One arrives at an
acyclic complex of the form
\begin{equation}
  \label{eq:44}
  \cF^{\cT}_{\bullet}:\quad
  0 \larr S \stackrel{M(\partial_{0})}{\larr} \bigoplus_{\tau\in F_{0}}S(-\ba_{\tau})
  \stackrel{M(\partial_{1})}{\larr} \cdots \stackrel{M(\partial_{t-1})}{\larr}
  S(-\ba_{[t]})\larr 0 \, ,
\end{equation}
whose $0$-th homology equals $S/I$, so this is a free resolution of $S/I$ of length $t$.
For applications, this means that starting with some toric variety $X$, the power set of
its Stanley-Reisner ideal $I_{\Delta}$ contains all information about the full Taylor
resolution of $S/I_{\Delta}$.\footnote{Here, the term ``power set of an ideal'' stands for
  taking all possible unions of the generators. In fact, the sequences for ``remnant
  cohomology'' in the algorithm of~\cite{Cohompaper} come from the combinatorics of this
  power set and the connection with the full Taylor resolution of $S/I$ will be important
  for the proof.} Unfortunately, the Taylor resolution is almost never minimal. More
precisely, the Taylor resolution is minimal if and only if for all faces
$\sigma\in\Delta_{[r]}$ and all indices $i\in\sigma$, the monomials $m_{\sigma}$ and
$m_{\sigma\setminus i}$ are different.\footnote{For example, the Taylor resolution of the
  Stanley-Reisner ring of $X=dP_{3}$ is not minimal, since the subset $\lbrace
  m_{1},m_{2},m_{3}\rbrace = \lbrace x_{1}x_{2},x_{1}x_{3},x_{2}x_{3} \rbrace$ is among
  the generators of its Stanley-Reisner ideal, cf.~the examples in~\cite{Cohompaper}.}

All information about the minimal free resolution of some $S$-module can be encoded in
some integer numbers. If we take the complex $\cF_{\bullet}$ from \eqref{eq:18} to be a
minimal free resolution of an $S$-module $M$ and write the $V_{i}$ as
\begin{equation}
  \label{eq:20}
  V_{i}=\bigoplus_{\ba\in\IZ^{n}}S(-\ba )^{\beta_{i,\ba}}\, ,
\end{equation}
then the $i$-th \textit{Betti number} of $M$ in degree $\ba$ is the invariant
\begin{equation}
  \label{eq:21}
  \beta_{i,\ba}(M)=\beta_{i,\ba}\, .
\end{equation}
Betti numbers can also be characterized more categorically in terms of the
Tor-functor.\footnote{See~\cite{Weibel} for more details on these categorical issues.}
For two $S$-modules $M$ and $N$ one can describe the modules $\Tor_{i}^{S}(M,N)$ by
applying the functor  $\_\_\, \otimes_{S} N$ to a free resolution of $M$ and taking
homology of the resulting complex. Since all relevant notions have a generalization to the
$\IZ^{n}$-graded setting, also the Tor-modules can be given a natural $\IZ^{n}$-grading.
Intuitively speaking, the Betti numbers of $M$ then describe what survives when tensoring any
free resolution with $\IC$ and taking homology:
\begin{equation}
  \label{eq:22}
  \beta_{i,\ba}(M)=\dim_{\IC}\left(\Tor_{i}^{S}(M,\IC)_{\ba}\right)\, .
\end{equation}
Note that the tensor product over $S$ of a shifted free
module $S(-\ba )$ with the ground field $\IC\cong S/\fm $ is equal to a copy of
the ground field in degree $\ba\in\IZ^{n}$:
\begin{equation}
  \label{eq:45}
  S(-\ba )\otimes_{S}\IC \cong \IC (-\ba )
\end{equation}
This means that one has an easy description of the degree $\ba$ piece of a tensored
resolution $\cF_{\bullet}\otimes_{S}\IC$. All copies of $S(-\ba )$ for some
$\ba\in\IZ^{n}$ that were present in the resolution become one-dimensional vector spaces
$\IC (-\ba)$ and since all maps between source and target degrees with $\ba_{p}\neq
\ba_{q}$ become zero, one can restrict to degree $\ba$ by just looking at the subcomplex
of $\cF_{\bullet}\otimes_{S}\IC$ made up of the spaces $\IC(-\ba)$. These considerations
will play a role in the proof of our theorem later.

The Betti numbers of a monomial ideal $I_{\Delta}\subseteq S$ may be calculated in
different ways. One possibility is to take the (co)homology of certain simplicial
subcomplexes of the associated complex $\Delta$ (resp.~the Alexander dual $\Delta^{*}$).
This is described by the Hochster formula (resp.~its dual version). 

For each $\sigma \subseteq [n]$ define the \textit{restriction} of a simplicial complex
$\Delta$ to $\sigma$ by
\begin{equation}
  \label{eq:23}
  \Delta |_{\sigma} = \lbrace \tau \in\Delta \, |\, \tau\subseteq \sigma \rbrace \, .
\end{equation}
For $\sigma\subseteq [n]$ write $\widetilde{\sigma}\in\IZ^{n}$ for the (squarefree) degree
with components $\widetilde{\sigma}_{i}=1$ if $i\in\sigma$ and $\widetilde{\sigma}_{i}=0$
otherwise. Since the meaning can always be inferred from the context, we subsequently omit
the tilde and write $\sigma$ also for the element in $\IZ^{n}$. Treating
$I_{\Delta}$ and $S/I_{\Delta}$ as (graded) $S$-modules, their Betti numbers lie only in
squarefree degrees $\sigma$ and can be calculated by the \textit{Hochster formula}
\begin{equation}
  \label{eq:24}
  \beta_{i-1,\wsig}(I_{\Delta})=\beta_{i,\wsig}(S/I_{\Delta})=\dim_{\IC}\,\wH^{\lvert
  \sigma \rvert -i-1}(\Delta |_{\sigma})\, .
\end{equation}
There is also a description of these Betti numbers in terms of the Alexander dual
complex $\Delta^{*}$. For any $\sigma \subseteq [n]$ define the \textit{link} of $\sigma$
inside the simplicial complex $\Delta$ to be 
\begin{equation}
  \label{eq:25}
  \link_{\Delta}(\sigma)=\lbrace \tau \in \Delta \, |\, \tau\cup\sigma \in\Delta
  \,\,\text{and}\,\, \tau\cap\sigma=\emptyset \rbrace\, .
\end{equation}
Furthermore, denote the \textit{complement} of $\sigma\subseteq [n]$ by
$\overline{\sigma}=[n]\setminus\sigma$. Then the \textit{dual Hochster formula} states that
\begin{equation}
  \label{eq:26}
  \beta_{i,\wsig}(I_{\Delta})=\beta_{i+1,\wsig}(S/I_{\Delta})=\dim_{\IC}\,
  \wH_{i-1}(\link_{\Delta^{*}}(\overline{\sigma}))\, .
\end{equation}
Because of eq.~\eqref{eq:17}, in all of these formulas simplicial homology may be treated
for cohomology when computing Betti numbers.

\subsection{Local Cohomology and Betti Numbers}
\label{sec:local-cohom-betti}

Now we come to the connection between local cohomology and Betti numbers stated
in~\cite{EMS-Paper}. Let $\ba\in\IZ^{n}$ and define the set of indices with negative
entries by $\negind(\ba)=\lbrace i\in [n]\, |\, \ba_{i}<0 \rbrace\subseteq [n]$. As it turns out, the
graded parts of local cohomology of $S$ with support on the irrelevant ideal of some toric
variety do only depend on the negative entries in the degree, i.e.~for
$\ba,\bb\in\IZ^{n}$ with $\negind(\ba )=\negind(\bb )$, one has
\begin{equation}
  \label{eq:27}
  H^{i}_{B_{\Sigma}}(S)_{\ba}\cong H^{i}_{B_{\Sigma}}(S)_{\bb}\, 
\end{equation}
or in the notation of the last section
\begin{equation}
  \label{eq:30}
  H^{i}_{B_{\Sigma}}(S)_{\ba}\cong H^{i}_{B_{\Sigma}}(S)_{-\wsig}\, 
\end{equation}
when $\negind (\ba )=\sigma$. Therefore, one only has to compute
$H^{i}_{B_{\Sigma}}(S)_{-\wsig}$ for all $\sigma\subseteq [n]$. The decisive result is now
that via Alexander duality there is a direct relation between these graded pieces of local
cohomology and the Betti numbers of $S/I_{\Sigma}$, since (cf.~Corollary 1.2
in~\cite{EMS-Paper})
\begin{equation}
  \label{eq:28}
  H^{i}_{B_{\Sigma}}(S)_{-\wsig}\cong \Tor_{|\sigma | - i +1}^{S}(S/I_{\Sigma},\IC )_{\wsig} \, ,
\end{equation}
and application of eq.~\eqref{eq:22} then gives
\begin{equation}
  \label{eq:29}
  \dim_{\IC}\left( H^{i}_{B_{\Sigma}}(S)_{-\wsig}\right) = \beta_{|\sigma | -i +
    1,\wsig}(S/I_{\Sigma})\, .
\end{equation}
Inserting this into eq.~\eqref{eq:13}, we finally get a closed formula for the line bundle
cohomology $h^{i}(\alpha)\ce \dim_{\IC}H^{i}(X,\cO_{X}(\alpha))$ in terms of ``chamber
factors'' and Betti numbers as
\begin{equation}
  \label{eq:31}
  h^{i}(\alpha)
  =\sum_{\substack{ \bu\in\IZ \\ f(\bu )=\alpha }}\dim_{\IC}\,
  H^{i+1}_{B_{\Sigma}}(S)_{-\negind(\bu)} 
  =\sum_{\sigma \subseteq [n]}\vert (\alpha ,\sigma ) \vert \cdot \beta_{|\sigma |
    -i,\wsig}(S/I_{\Sigma})\, ,
\end{equation}
where $| (\alpha ,\sigma ) |$ counts the number of elements in the set
\begin{equation}
  \label{eq:32}
  (\alpha ,\sigma ) =\lbrace \bu\in\IZ^{n}\, |\, f(\bu )=\alpha\, ,\, \negind(\bu
  )=\sigma \rbrace \, .
\end{equation}
This is the starting point for the proof of our algorithm to which we turn next.

\section{Proof of the Conjecture}

We first want to restate the conjecture from~\cite{Cohompaper} in a more precise form.
Therefore, we divide it into two parts, where the first part refers to the possible
restriction to degrees in the power set of the Stanley-Reisner ideal by a vanishing of
Betti numbers while the second gives a mathematically precise way to compute the ``remnant
cohomology''.

Let $X$ be a complete simplicial smooth normal toric variety in the sense
of~\cite{CoxLittleSchenk:ToricVarieties} endowed with all necessary data listed in
Section~\ref{sec:norm-toric-vari}.
In particular, let the Stanley-Reisner ideal $I_{\Sigma}=\langle \cS_{1}, \dots
,\cS_{t}\rangle$ be generated by $t$ different squarefree monomials $\cS_{i}$ in the
coordinate ring $S=\IC [\bx ]=\IC [x_{1},\dots , x_{n}]$. For all $\tau\subseteq [t]$ set
$\cS_{\tau}=\lcm\lbrace \cS_{i}\, |\, i\in \tau \rbrace$ and denote by $\ba_{\tau}=\degree
(S_{\tau})\in\IZ^{n}$ its degree in the natural $\IZ^{n}$-grading of $S$. Furthermore, the
\textit{degrees in the power set of} $I_{\Sigma}$ are given by
\begin{equation}
  \label{eq:33}
  \cP (I_{\Sigma})=\lbrace \ba_{\tau}\, | \,\tau\in [t] \rbrace\, .
\end{equation}

Recall from Section \ref{sec:simpl-compl} that the full Taylor resolution
$\cF^{\cT}_{\bullet}$ of $S/I_{\Sigma}$ is based on the Taylor complex $\cT_{\bullet}(t)$
which is just the reduced chain complex of the full simplex $\Delta_{[t]}$. For some
$\sigma\subseteq [n]$ define the (relative) simplicial subcomplex
\begin{equation}
  \label{eq:41}
  \Gamma^{\sigma} =\lbrace \tau\in [t]\, | \, \ba_{\tau}=\wsig \rbrace\, .
\end{equation}
of the full simplex $\Delta_{[t]}$. One can define maps between the sets $F_{j}(\Gamma^{\sigma})$ of
$j$-dimensional faces of this subcomplex similar to eq.~\eqref{eq:14} by
\begin{equation}
  \label{eq:42}
  \phi_{j}: F_{j}(\Gamma^{\sigma})\longrightarrow F_{j-1}(\Gamma^{\sigma})\, ,\quad e_{\tau}\mapsto \sum_{k\in\tau}\sign
  (k,\tau)e_{\tau\setminus k}\, ,
\end{equation}
where $e_{\tau\setminus k}=0$ if $\tau\setminus k\notin\Gamma^{\sigma}$ and $\sign
(k,\tau )=(-1)^{s-1}$ when $k$ is the $s$-th element of $\tau\subseteq [t]$ written in
increasing order. Since this is just the restriction of the boundary maps in
$\cT_{\bullet}(t)$, it is easy to see that this yields a well-defined complex
$\wC_{\bullet}(\Gamma^{\sigma})$ with associated (reduced relative) homology
$\wH_{\bullet}(\Gamma^{\sigma})$.

\begin{theorem}
  Let $\alpha\in\Class(X)$ and $h^{i}(\alpha )\ce \dim_{\IC}\,
  H^{i}(X,\cO_{X}(\alpha ))$. Then we have:
  \begin{enumerate}[(a)]
  \item For all $\sigma\subseteq [n]$ with $\wsig\notin\cP (I_{\Sigma})$ the associated
    Betti numbers vanish, i.e.~
    \begin{equation}
      \label{eq:39}
      \beta_{r,\wsig }(S/I_{\Sigma})=0 \quad \text{for all } r\geq 0\, .
    \end{equation}
    Therefore, one may restrict the sum in eq.~\eqref{eq:31} to
    \begin{equation} 
      \label{eq:37}
      h^{i}(\alpha )=\sum_{\substack{\sigma \subseteq [n]\\ \wsig\in\cP
          (I_{\Sigma})}} | (\alpha ,\sigma ) |\, \cdot \beta_{| \sigma | -i
        ,\wsig}(S/I_{\Sigma})\, ,
    \end{equation}
    where $| (\alpha ,\sigma ) |$ counts the number of elements in the set\footnote{We want to note that the
    set $(\alpha ,\sigma)$ corresponds to all so-called \textit{rationoms} $\bx^{\bu}$
    with $f(\bu )=\alpha$ and precisely the coordinates $x_{i}$ with $i\in\sigma$ standing
    in the denominator. Intuitively, these rationoms can be interpreted as
    ``representatives'' of \v{C}ech cohomology on intersections of open sets in the toric
    variety, cf.~Section 2.2 of~\cite{Cohompaper}.}
    \begin{equation}
      \label{eq:34}
      (\alpha ,\sigma ) =\lbrace \bu\in\IZ^{n}\, |\, f(\bu )=\alpha\, ,\, \negind(\bu
      )=\sigma \rbrace \, .
    \end{equation}
  \item The Betti numbers $\beta_{r , \wsig}(S/I_{\Sigma})$ can be calculated from the
    degree $\wsig$ part of the full Taylor resolution. This can be described in terms of
    the (relative) subcomplex $\Gamma^{\sigma}$ as\footnote{This corresponds to the
      sequences for ``remnant cohomology'' in~\cite{Cohompaper}. By counting the number of
    times that a fixed denominator $\bx^{\sigma}$ appears in rank $r$ of the
    Stanley-Reisner power set, one gets the number of $(r-1)$-faces of the complex
    $\Gamma^{\sigma}$. If one also takes notice of the different combinations of
    Stanley-Reisner generators that lead to this denominator, one can write down the maps
    in~\eqref{eq:42} and gets a well-defined complex.}
    \begin{equation}
      \label{eq:40}
      \beta_{r, \wsig}(S/I_{\Sigma})= \dim_{\IC}\, \wH_{r-1} 
      (\Gamma^{\sigma} ) \, .
    \end{equation}
  \end{enumerate} 
\end{theorem}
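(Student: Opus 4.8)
The key observation is that both parts of the theorem follow from analyzing the full Taylor resolution $\cF^{\cT}_\bullet$ of $S/I_\Sigma$ in a fixed $\IZ^n$-degree $\wsig$, and comparing it with the minimal free resolution via the Tor-characterization of Betti numbers in eq.~\eqref{eq:22}. First I would invoke the fact from Section~\ref{sec:simpl-compl} that $\Tor^S_i(S/I_\Sigma,\IC)$ can be computed by tensoring \emph{any} free resolution of $S/I_\Sigma$ with $\IC$ and taking homology; in particular the (typically non-minimal) full Taylor resolution $\cF^{\cT}_\bullet$ may be used. Applying eq.~\eqref{eq:45}, each summand $S(-\ba_\tau)$ in homological degree $j$ of $\cF^{\cT}_\bullet$ becomes $\IC(-\ba_\tau)$, so the degree-$\wsig$ piece of $\cF^{\cT}_\bullet\otimes_S\IC$ is exactly the subcomplex spanned by those $\IC(-\ba_\tau)$ with $\ba_\tau=\wsig$, i.e.\ by the basis vectors $e_\tau$ for $\tau\in\Gamma^\sigma$. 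The induced differential is the restriction of the Taylor differential, which by construction (the monomial-matrix entries $\lambda_{\tau,\tau\setminus k}=\sign(k,\tau)$, with terms landing in degrees $\ne\wsig$ killed by the tensor product) is precisely the map $\phi_j$ of eq.~\eqref{eq:42}. Hence $\left(\cF^{\cT}_\bullet\otimes_S\IC\right)_{\wsig}=\wC_\bullet(\Gamma^\sigma)$ up to a shift in homological index.

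**Part (b).** Matching the indexing conventions gives eq.~\eqref{eq:40} directly: homological degree $r$ of the Taylor resolution uses $r$-element subsets $\tau\subseteq[t]$, which are the $(r-1)$-dimensional faces of $\Delta_{[t]}$, so the piece of $\cF^{\cT}_\bullet\otimes_S\IC$ in degree $\wsig$ and homological degree $r$ is $\IC^{F_{r-1}(\Gamma^\sigma)}$. Therefore $\beta_{r,\wsig}(S/I_\Sigma)=\dim_\IC\Tor^S_r(S/I_\Sigma,\IC)_{\wsig}=\dim_\IC\wH_{r-1}(\Gamma^\sigma)$, where $\wH_{r-1}(\Gamma^\sigma)$ is the homology of the complex $\wC_\bullet(\Gamma^\sigma)$ defined via eq.~\eqref{eq:42}. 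One small point to nail down carefully is the boundary cases (the roles of $F_{-1}$, the empty face, and whether $\Gamma^\sigma$ is void or irrelevant), and checking that the sign conventions in eq.~\eqref{eq:42} genuinely coincide with those inherited from the Taylor monomial matrices $M(\partial_j)$ so that $\phi_{j-1}\circ\phi_j=0$ — but this is exactly the content of the sentence preceding the theorem, so it is routine.

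**Part (a).** This is now an immediate corollary of the computation above: if $\wsig\notin\cP(I_\Sigma)$, then by eq.~\eqref{eq:33} there is \emph{no} $\tau\subseteq[t]$ with $\ba_\tau=\wsig$, so $\Gamma^\sigma=\{\}$ is the void complex, $\wC_\bullet(\Gamma^\sigma)=0$, and all its homology vanishes; hence $\beta_{r,\wsig}(S/I_\Sigma)=0$ for every $r\ge 0$. Feeding this vanishing into eq.~\eqref{eq:31} lets one drop all terms with $\wsig\notin\cP(I_\Sigma)$, yielding eq.~\eqref{eq:37}; the description of $|(\alpha,\sigma)|$ as an integer-point count is unchanged from eq.~\eqref{eq:32}. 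Alternatively, part~(a) also follows from Hochster's formula \eqref{eq:24} together with the standard fact that $\wsig\notin\cP(I_\Sigma)$ forces the relevant restricted complex $\Delta|_\sigma$ (equivalently the Koszul-type degree piece) to be acyclic — but the Taylor-resolution argument is cleaner and already produces the object $\Gamma^\sigma$ needed for part~(b).

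**Main obstacle.** The only genuinely delicate step is the bookkeeping: verifying that restricting the Taylor differential to the degree-$\wsig$ strand reproduces \emph{exactly} the complex $\wC_\bullet(\Gamma^\sigma)$ with the signs of eq.~\eqref{eq:42} — i.e.\ that no spurious cancellations, degree mismatches, or sign discrepancies occur, and that the homological/face-dimension indices ($r$ versus $r-1$, and the handling of the degree-$(-1)$ face contributing $S$ in homological degree $0$) line up so that $\beta_{r,\wsig}$ pairs with $\wH_{r-1}$ rather than some neighboring index. Everything else is a direct application of eq.~\eqref{eq:22}, eq.~\eqref{eq:45}, and eq.~\eqref{eq:31}.
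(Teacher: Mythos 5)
Your proposal follows essentially the same route as the paper's own proof: tensor the full Taylor resolution with $\IC$, observe via eq.~\eqref{eq:45} that the degree-$\wsig$ strand is spanned by the $e_\tau$ with $\ba_\tau=\wsig$ with differential given by eq.~\eqref{eq:42}, identify this strand with $\wC_{\bullet-1}(\Gamma^\sigma)$ (accounting for the homological shift from $F_{-1}$ versus the $S$ in degree $0$), and then deduce both (b) via $\Tor$ and (a) from the voidness of $\Gamma^\sigma$ when $\wsig\notin\cP(I_\Sigma)$. The ``main obstacle'' you flag --- the index and sign bookkeeping --- is precisely what the paper addresses in the sentence explaining eq.~\eqref{eq:54}, so your argument is correct and matches the paper's line by line.
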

\begin{proof}[Proof of the Theorem:]
  To get the desired Betti numbers, we can tensor the Taylor resolution of $S/I_{\Sigma}$
  with $\IC \cong S/\fm$, extract the degree $\wsig$ part and take homology,
  cf.~\eqref{eq:22}. As we have described around eq.~\eqref{eq:45}, the tensored
  resolution will just be made up of vector spaces $\IC (-\ba )$ of degree $\ba$ at the
  locations of $S(-\ba)$ in the original resolution. Considering the maps of the tensored
  resolution, note that all entries $\lambda_{\tau, \rho}$ with $\ba_{\tau}\neq
  \ba_{\rho}$ become zero, since they correspond to multiplication by
  $\bx^{\ba_{\rho}-\ba_{\tau}}=0\,$ in $S/\fm$. So we can easily extract graded parts. In
  particular, since we started with a Taylor complex, the restriction of the tensored
  resolution to its degree $\sigma$ part will consist of all occurrences of
  $\IC(-\ba_{\tau})$ with $\ba_{\tau}=\sigma$ and therefore be equivalent to the
  (relative) complex $\Gamma^{\sigma}$ with maps as in eq.~\eqref{eq:42}, i.e.
  \begin{equation}
    \label{eq:54}
    \left( \cF^{\cT}_{\bullet}\otimes_{S}\IC \right)_{\sigma} \cong \wC_{\bullet-1}(\Gamma^{\sigma})\, .
  \end{equation}
  The shift by one in the homological degree on the right hand side comes from the fact
  that the empty set in the Taylor complex lies in homological degree $-1$ while the corresponding
  free module $S$ in the full Taylor resolution of eq.~\eqref{eq:44} lies in homological
  degree $0$. To get $\Tor^{S}_{r}(S/I_{\Sigma},\IC)_{\wsig}$,
  we still have to take homology, yielding
  \begin{equation}
    \label{eq:47}
    \Tor^{S}_{r}(S/I_{\Sigma},\IC)_{\wsig} \cong \wH_{r-1} (\Gamma^{\sigma})\, ,
  \end{equation}
  which finally implies eq.~\eqref{eq:40} by taking dimensions. If $\wsig\notin\cP
  (I_{\Sigma})$, the complex $\Gamma^{\sigma}$ is void and therefore has zero homology.
  This implies that the respective Betti numbers vanish and the expression for
  $h^{i}(\alpha)$ then follows from eq.~\eqref{eq:31}.
\end{proof}

We also want to point out the simple connection of our theorem to the statement
in~\cite{Jow}. To see this, fix some $\sigma\subseteq [n]$ and then arrange the generators
of $I_{\Sigma}=\langle \cS_{1},\dots ,\cS_{m},\cS_{m+1},\dots ,\cS_{t}\rangle $ such that
$\cS_{i}$ divides $\bx^{\wsig}$ for all $i\leq m$ and $m$ is maximal with this property.
Now define a subcomplex of $\Delta_{[m]}$ by
\begin{equation}
  \label{eq:38}
  \Lambda^{\sigma}= \lbrace \mu\in [m]\, |\, \ba_{\mu} < \wsig \rbrace\, ,
\end{equation}
where ``$<$'' is the partial order of $\IZ^{n}$. It is easy to see that
$\Gamma^{\sigma}$ also lies in $\Delta_{[m]}$ and actually is the relative complex
\begin{equation}
  \label{eq:48}
  \Gamma^{\sigma} = \Delta_{[m]} / \Lambda^{\sigma}\, .
\end{equation}
Equivalence of our theorem to the theorem of Jow is then provided by the 
\begin{lemma}
  The homologies of the reduced chain complexes of $\Lambda^{\sigma}$ and
  $\Gamma^{\sigma}$ are isomorphic up to a shift in the homological degree, i.e.
  \begin{equation}
    \label{eq:49}
    \wH_{i}(\Gamma^{\sigma}) \cong \wH_{i-1} (\Lambda^{\sigma}) \, .
  \end{equation}
\end{lemma}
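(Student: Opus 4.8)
The plan is to exploit the identification $\Gamma^{\sigma}=\Delta_{[m]}/\Lambda^{\sigma}$ from eq.~\eqref{eq:48} and read off the claimed shift from the long exact sequence of the pair $(\Delta_{[m]},\Lambda^{\sigma})$ in reduced simplicial homology,
\begin{equation*}
  \cdots \larr \wH_{i}(\Lambda^{\sigma}) \larr \wH_{i}(\Delta_{[m]}) \larr \wH_{i}(\Delta_{[m]}/\Lambda^{\sigma}) \larr \wH_{i-1}(\Lambda^{\sigma}) \larr \wH_{i-1}(\Delta_{[m]}) \larr \cdots\, .
\end{equation*}
Here $\wH_{\bullet}(\Gamma^{\sigma})=\wH_{\bullet}(\Delta_{[m]}/\Lambda^{\sigma})$ is, by construction of the relative chain complex, the homology of the quotient complex $\wC_{\bullet}(\Delta_{[m]})/\wC_{\bullet}(\Lambda^{\sigma})$, so the above is simply the long exact homology sequence attached to the short exact sequence of chain complexes $0\to \wC_{\bullet}(\Lambda^{\sigma})\to \wC_{\bullet}(\Delta_{[m]})\to \wC_{\bullet}(\Gamma^{\sigma})\to 0$.

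First I would argue that $\Delta_{[m]}$ is the full simplex on $[m]$, hence contractible, so all its reduced homology vanishes: $\wH_{i}(\Delta_{[m]})=0$ for every $i$ (including $i=-1$, since $[m]$ itself is a face and $m\ge 1$ — note that if $\wsig\notin\cP(I_\Sigma)$ there is nothing to prove, and otherwise $m\ge 1$). Plugging these zeros into the long exact sequence collapses it into a family of isomorphisms
\begin{equation*}
  \wH_{i}(\Gamma^{\sigma}) \xrightarrow{\ \sim\ } \wH_{i-1}(\Lambda^{\sigma})
\end{equation*}
for all $i$, which is exactly eq.~\eqref{eq:49}. The connecting homomorphism in this sequence is the standard one, so the isomorphism is natural, though naturality is not needed for the dimension statement.

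The only genuine care needed is at the bottom of the complexes, i.e.~in homological degrees $-1,0$: one must check that the short exact sequence of chain complexes is exact there too, that the reduced chain complex of $\Delta_{[m]}$ really is acyclic including the augmentation term $\IC^{F_{-1}}\cong\IC$, and that the quotient $\wC_{\bullet}(\Delta_{[m]})/\wC_{\bullet}(\Lambda^{\sigma})$ indeed has $\wC_{\bullet}(\Gamma^{\sigma})$ — with the maps~\eqref{eq:42} in which $e_{\tau\setminus k}$ is set to zero whenever $\tau\setminus k\notin\Gamma^{\sigma}$ — as its underlying complex. This last point is essentially bookkeeping: a face $\tau\in[m]$ lies in $\Gamma^\sigma$ iff $\ba_\tau=\wsig$, and since $m$ was chosen so that $\cS_i\mid \bx^{\wsig}$ exactly for $i\le m$, one has $\ba_\tau\le\wsig$ for all $\tau\in[m]$, so $\tau\notin\Gamma^\sigma$ iff $\ba_\tau<\wsig$ iff $\tau\in\Lambda^\sigma$; thus $F_j(\Gamma^\sigma)=F_j(\Delta_{[m]})\setminus F_j(\Lambda^\sigma)$ basiswise and the quotient complex is identified as claimed. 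I expect the main (very mild) obstacle to be precisely this degree-$(-1)$/degree-$0$ boundary check together with confirming $\Lambda^\sigma$ is a genuine subcomplex — i.e.~closed under taking faces, which follows because $\ba_\mu\le\ba_\nu$ whenever $\mu\subseteq\nu$ so $\ba_\nu<\wsig$ forces $\ba_\mu<\wsig$ — but once that is in place the long exact sequence does all the work and no computation is required.
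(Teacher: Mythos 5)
Your argument is exactly the paper's: form the short exact sequence of reduced chain complexes $0\to \wC_{\bullet}(\Lambda^{\sigma})\to \wC_{\bullet}(\Delta_{[m]})\to \wC_{\bullet}(\Gamma^{\sigma})\to 0$, note that $\Delta_{[m]}$ is a full simplex with vanishing reduced homology, and read off the degree-shifted isomorphisms from the long exact sequence in homology. The extra bookkeeping you supply about the degree $-1$/$0$ terms and the subcomplex property of $\Lambda^\sigma$ is consistent with the paper and only makes explicit what it leaves tacit.
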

\begin{proof}
  Because of eq.~\eqref{eq:48}, one has an exact sequence of reduced chain complexes
  \begin{equation}
    \label{eq:50}
    0 \rarr \wC_{\bullet}(\Lambda^{\sigma})\rarr \wC_{\bullet}(\Delta_{[m]}) \rarr
    \wC_{\bullet}(\Gamma^{\sigma}) \rarr 0\, ,
  \end{equation}
  where the first map is the natural inclusion. Since $\Delta_{[m]}$ is a full simplex,
  its homology vanishes and the long exact sequence in homology then yields 
  \begin{equation}
    \label{eq:51}
    \wH_{\bullet}(\Gamma^{\sigma})\cong \wH_{\bullet - 1} (\Lambda^{\sigma})\, .
  \end{equation}
\end{proof}

\section{Conclusions and Open Questions}
\label{sec:altern-algor}

As one can already tell from the Hochster formulas, there are many different complexes
from which the Betti numbers of a squarefree monomial ideal can be calculated. All of these have
their computational advantages and disadvantages. We want to mention the version of Smith
in~\cite{Smith}, Proposition 3.2, which is implemented in the Macaulay2 package
``NormalToricVarieties.m2''. It follows from an application of the Hochster
formula~\eqref{eq:24} to the results of Section \ref{sec:local-cohom-betti}. Since this
version of the Hochster formula contains a restriction of the complex $\Delta$, it also
requires the knowledge of the maximal faces of $\Delta$ and therefore of the irrelevant
ideal of the toric variety. In contrast to this, a nice feature of the new algorithm is
that it is only based on the combinatorics of the Stanley-Reisner ideal, which also
explains why the vanishing of Betti numbers with degrees not in the power set of
$I_{\Sigma}$ was not detected before.

Since we aim for an efficient computation of sheaf cohomology, it would be desirable to
get rid of explicit maps in the complexes that are needed for the computation of Betti
numbers. The perfect statement would be that for a fixed degree there can at most be one
nontrivial Betti number, as then the alternating sum of dimensions and some considerations
about the possible homological degree of the contribution would suffice to determine it.
This statement is strikingly similar to the Eagon-Reiner-Theorem for Cohen-Macaulay
Stanley-Reisner rings, see Theorem 5.56 in~\cite{Sturmfels}. But for a normal toric
variety, the fact that $S/I_{\Sigma}$ is Cohen-Macaulay only implies that the Alexander
dual $B_{\Sigma}$ has the desired property. The question is now, if some of these
simplifications carry over to the resolution of $I_{\Sigma}$. There are some theorems
concerning such a duality for resolutions, but the only precise statements can be made for
so-called extremal Betti numbers, see~\cite{Charalambous}. It is not clear to the authors
how many non-extremal Betti numbers are around in the general case that spoil the broth. 

In fact, by running our program for all ``boundary divisors'' with charge
\begin{equation}
  \label{eq:55}
\alpha_{\sigma}=f(-\sigma)\in\Cl (X),\quad \sigma\subseteq [n]  
\end{equation}
for certain (perhaps too special) examples\footnote{For $X=dP_{3}$ the
  divisor $D=-3H-X-Y-Z$ that we took as an example in~\cite{Cohompaper} is such a boundary
  divisor with the same charge as $x_{1}^{-1}x_{2}^{-1}x_{3}^{-1}$.} of toric varieties, we learned that they
always had at most one non-trivial cohomology and the number coincided with the only
non-zero Betti number in degree $\sigma$. This means that for these examples we not only
found a certain uniqueness of the cohomological degree that a ``denominator'' can
contribute to, but also that either there is only one non-empty ``chamber'', namely $| (\alpha
,\sigma )|=1$, or all other solutions 
\begin{equation}
  \label{eq:56}
  \lbrace \bu\in\IZ^{n} \, | \, f(\bu )=\alpha_{\sigma}\, , \bu \neq -\sigma \rbrace
\end{equation}
lie in chambers with vanishing Betti numbers. This implies that in these cases one has
some kind of ``Serre duality for Betti numbers'', meaning that when starting with Serre
duality for line bundles
\begin{equation}
  \label{eq:52}
  h^{i}(\alpha_{\sigma})=h^{d-i}(\alpha_{\overline{\sigma}})\, ,
\end{equation}
where $\overline{\sigma}$ denotes the complement of $\sigma$, one also gets
\begin{equation}
  \label{eq:53}
  \beta_{|\sigma| - i,\sigma}(S/I_{\Sigma})=\beta_{n-d-(|\sigma|-i),\overline{\sigma}}(S/I_{\Sigma})\, .
\end{equation}
In particular, by an application of our theorem, there is another vanishing of Betti
numbers
\begin{equation}
  \label{eq:57}
  \beta_{r,\sigma}(S/I_{\Sigma})=0\, \,\, \text{whenever }\,\overline{\sigma}\notin\cP (I_{\Sigma})\, 
\end{equation}
 in these examples.

 We would very much appreciate comments or explicit (counter-)examples concerning these
 issues and hope that the computational power of the algorithm will lead to some new
 insights in the diverse areas of mathematics and physics it can be applied to.

\subsection*{Acknowledgement}

We owe a lot to our supervisor Ralph Blumenhagen, who formulated the conjecture merely out
of his clear-sighted intuition. It is also a pleasure to thank Benjamin Jurke for helpful
discussions, proof-reading and, most importantly, for the quick and robust implementation
of the algorithm. Furthermore, we would like to thank Hal Schenck for supplying us with
hints and Manfred Herbst for interesting discussions about the possibilities to compute
sheaf cohomology in terms of empty D-branes in derived categories.


\clearpage
\providecommand{\href}[2]{#2}\begingroup\raggedright\endgroup


\end{document}